\documentclass[reqno]{amsart}
\usepackage[T1]{fontenc}
\usepackage{lmodern,amsmath,amssymb,amsthm,mathtools,microtype}
\usepackage[margin=0.82in]{geometry}
\usepackage[hidelinks]{hyperref}
\usepackage{algorithm}
\usepackage{algpseudocode}

\newtheorem{theorem}{Theorem}[section]
\newtheorem{proposition}[theorem]{Proposition}
\newtheorem{lemma}[theorem]{Lemma}
\newtheorem{corollary}[theorem]{Corollary}
\newtheorem{remark}[theorem]{Remark}

\theoremstyle{definition}
\newtheorem{definition}[theorem]{Definition}

\DeclareMathOperator{\disc}{disc}
\newcommand{\R}{\mathbb R}
\newcommand{\E}{\mathbb E}
\newcommand{\Prob}{\mathbb P}
\newcommand{\norm}[1]{\lVert#1\rVert}
\newcommand{\ip}[2]{\langle#1,#2\rangle}
\newcommand{\lstar}{\log^*}

\allowdisplaybreaks

\title{Improved Algorithms for Beck--Fiala with Bounded Sets}
\author{Dylan J. Altschuler}
\address{Dylan J. Altschuler, Department of Mathematics, University of Texas at Austin.}
\email{dylan.altschuler@austin.utexas.edu}

\date{}

\begin{document}
\maketitle
\vspace{-1.2em}

\begin{abstract}
We give an efficient algorithm with improved algorithmic guarantees for the (offline) Beck--Fiala problem when the sets have bounded size.  Let $A$ be an arbitrary matrix $A\in\{0,1\}^{m\times n}$ with at most $d$ ones per column and at most $s$ ones per row. Let $\log^*$ denote the iterated logarithm and $\ell_j$ denote the $j$-fold composition of log. Assume $s\le\exp(O(\sqrt d))$. We provide an efficient algorithm that, for arbitrary sparsity $d$, gives $O(\sqrt d(1+\log^*n))$ discrepancy. Moreover, if $d\ge\ell_j(n)$ for a fixed integer $j\ge1$, the algorithm gives $O_j(\sqrt d)$ discrepancy.  The proof is a bootstrapping scheme using the Bansal-Jiang algorithm.
\end{abstract}

\section{Introduction}

For a matrix $A\in\R^{m\times n}$, write
\[ \disc(A)=\min_{\chi\in\{-1,1\}^n}\norm{A\chi}_\infty. \]
If $A$ is the incidence matrix of a set system and each element belongs to at most $d$ sets, the Beck--Fiala conjecture predicted $\disc(A)=O(\sqrt d)$~\cite{BF}. Note that bounds on the maximum row-sum of $A$, which are not assumed in the original Beck-Fiala setting, correspond to bounds on the maximum set size. We consider the task of obtaining efficient (polynomial in the input size) algorithmic discrepancy guarantees for systems with maximum set size of $s$.

\begin{remark}
During the final stages of the writing of this article, Guo, Fang, and Lu used an automatic AI system to prove the stronger Koml\'os conjecture~\cite{GFL}.  Applied to $A/\sqrt d$, this resolves Beck--Fiala with no restriction on the set sizes.  Their proof is existential; the sharp algorithmic problem remains open and is the focus of this article.
\end{remark}

Building on their earlier work~\cite{BJearly}, Bansal and Jiang~\cite{BJ} gave a polynomial-time algorithm for the Beck--Fiala setting that attains discrepancy
\begin{equation}\label{eq:BJ} O\!\left(\min\{d,\sqrt d+(d\log(2n))^{1/3}\}\right), \end{equation}
and hence $O(\sqrt d)$ for $d\gtrsim\log^2n$~\cite[Theorem~5.1]{BJ}. Altschuler and Tikhomirov subsequently improved this to $d\gtrsim_\eta(\log n)(\log\log n)^{2+\eta}$ with an independent approach via an \textit{online} algorithm~\cite{AT}; see also \cite{AT2} which characterizes online discrepancy in the \textit{random} Beck--Fiala setting.  In a complementary direction, if every row has at most $s$ nonzero entries, Harris and Srinivasan give a randomized polynomial-time bound $O(\sqrt{s\log(2d)})$~\cite[Theorem~5.2]{HS}.

We obtain the existential scale algorithmically in bounded-set regimes far below the preceding sparsity thresholds. Define the $j$th smoothed iterated logarithm inductively by
\begin{equation}\label{eq:logs} \ell_0(n)=n, \qquad \ell_{j+1}(n)=\log(e+\ell_j(n)), \end{equation}
and define
\[ \lstar n:=\min\{j\ge0:\ell_j(n)\le e\}. \]
This convention differs from the usual iterated logarithm by at most an additive absolute constant.

\begin{theorem}[Improved algorithmic discrepancy for Beck--Fiala]\label{thm:main}
Let $A\in[-1,1]^{m\times n}$ have at most $d\ge1$ nonzero entries in each column and at most $s\ge1$ in each row.  For every $z\in[-1,1]^n$ and every fixed integer $k\ge1$, there is $\chi\in\{-1,1\}^n$, produced by an efficient randomized algorithm, such that
\begin{equation}\label{eq:fixed} \norm{A(\chi-z)}_\infty \le C_k\min\!\left\{d,\sqrt d+ \bigl[d\{\log(2s)+\ell_{k+1}(n)\}\bigr]^{1/3}\right\}. \end{equation}
There is also a universal constant $C$ for which
\begin{equation}\label{eq:uniform} \norm{A(\chi-z)}_\infty \le C\min\!\left\{d, \bigl[\sqrt d+\{d\log(2s)\}^{1/3}\bigr](1+\lstar n)\right\}. \end{equation}
In particular, taking $z=0$ bounds discrepancy.
\end{theorem}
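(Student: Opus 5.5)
The plan is a bootstrapping (shattering) argument in the spirit of Beck's algorithmic local lemma, with the Bansal--Jiang walk as the base routine. The $d$ branch of the minimum needs no new work: the linear-discrepancy version of the Beck--Fiala iterated-rounding argument gives $\norm{A(\chi-z)}_\infty<2d$ for every $z\in[-1,1]^n$, and I may run it whenever it is better. So the work is to replace the $\log(2n)$ in \eqref{eq:BJ} by $\log(2s)+\ell_{k+1}(n)$.

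Key step (one round of shattering). I would show that a size-$n$ instance reduces, at additive cost $O(\sqrt d+(d\log(2s))^{1/3})$, to disjoint instances with $n'\le \mathrm{poly}(ds)\log n$ columns. Start from $z$ and run the Bansal--Jiang walk, whose guarantee is local: each row's deviation is a martingale with subgaussian-type tails. Fix a threshold $\lambda\asymp \sqrt d+(d\log(2s))^{1/3}$, chosen so that each row exceeds $\lambda$ with probability at most $(ds)^{-C}$. When a row becomes dangerous (its partial deviation nears $\lambda$), I freeze all of its variables, so no row ever exceeds $\lambda$.

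Dependency structure. Two rows are adjacent if they share a column; each row then has at most $s(d-1)$ neighbours. After the walk ends, the unfrozen coordinates are fully rounded. The frozen variables, together with the rows touching them, split into connected components of this dependency graph.

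Component sizes. A Beck/Moser--Tardos witness-tree (or $2$-tree) count shows that, with high probability, every component has $O(\mathrm{poly}(ds)\log n)$ rows, hence $n'\le \mathrm{poly}(ds)\log n$ frozen columns. Each frozen row so far carries error at most $\lambda$. The residual problem on each component is again a linear-discrepancy instance with column sparsity $\le d$ and row sparsity $\le s$, whose target is the current fractional point, so it has exactly the form of the theorem. This is why $z$ is carried along.

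Iteration for \eqref{eq:fixed}. Apply the key step $k$ times; the per-component instance sizes satisfy $\log n_{i+1}\lesssim \log(2ds)+\ell_{i+1}(n)$. On the final components, invoke Bansal--Jiang \eqref{eq:BJ} directly, which contributes
\[
\sqrt d+\bigl[d(\log(2ds)+\ell_{k+1}(n))\bigr]^{1/3}.
\]
Row errors add across rounds, since every row's total change is the sum of its changes in each round. This gives $k+1$ terms, each of the stated form. The $\log d$ hidden in $\log(2ds)$ is harmless, because $(d\log d)^{1/3}\lesssim\sqrt d$. Hence the total is at most $C_k$ times the right-hand side of \eqref{eq:fixed}.

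Iteration for \eqref{eq:uniform}. Iterate instead until $\ell_j(n)\le e$, which takes $j=\lstar n$ rounds. Each round costs $O(\sqrt d+(d\log(2s))^{1/3})$ with a universal constant, because the constants in the key step do not depend on the round. Summing over the $1+\lstar n$ rounds gives \eqref{eq:uniform}.

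Main obstacle: the shattering analysis for the Bansal--Jiang walk. That walk is not a product measure; it uses SDP-guided correlated Gaussian increments, and the freezing decisions depend on the trajectory. The witness-tree argument needs a bound of the form
\[
\Prob(\text{rows }r_1,\dots,r_t\text{ at pairwise distance}\ge 2\text{ all become dangerous})\le (ds)^{-Ct}.
\]
I would try to get this from the martingale tail bounds for each row, applied conditionally and sequentially. Since disjoint rows are not independent, I would fall back on an exponential-moment product bound: their deviation martingales have bounded conditional variances, so a joint exponential supermartingale should yield the product of the individual tails.
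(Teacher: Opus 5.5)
Your outer scheme matches the paper's: a partial rounding with an exceptional set, and a $2$-tree count (as in Lemma~\ref{lem:components}) showing that this set shatters into pieces of $\mathrm{poly}(ds)\log N$ columns. Then you recurse from the current fractional point, with errors adding across rounds but not across pieces, and finally round pieces of size $N$ at cost $\sqrt d+(d\log 2N)^{1/3}$. The gap is the step you flag as the main obstacle. It is the entire technical content of the paper (Proposition~\ref{prop:partial}), and the version you propose targets the wrong events.

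In the Bansal--Jiang walk, row deviations are not what fails at random. The level mechanism caps them deterministically at $O(b)$: large rows are held fixed, a medium row moves up a level when its corrected potential $Y_i$ reaches $2b_\ell$, and top-level rows keep $Y_i$ nonincreasing. What can fail is the per-column budget \eqref{eq:column-budgets} on high-level rows. That budget is exactly what keeps $\dim W\le h/10$ and $\sum_\ell m_\ell/\alpha_\ell\le h/10$, so that the SDP direction of Lemma~\ref{lem:covariance} exists. At scale $\lambda\ll(d\log n)^{1/3}$ the budgets can no longer be secured by a union bound over the $n$ columns; this is precisely where $\log n$ enters \eqref{eq:BJ}. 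Freezing the variables of individual rows whose deviation nears $\lambda$ does not address this collective, per-column event. So the hypotheses of the SDP step can fail, and the walk you describe is not well defined.

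Your per-row tail also does not follow from martingale bounds. A row that is not held fixed can have size and deviation variance of order $d$. Getting $\Prob(\text{row exceeds }\lambda)\le(ds)^{-C}$ from subgaussian tails then forces $\lambda\gtrsim\sqrt{d\log(ds)}$, not $\sqrt d+(d\log 2s)^{1/3}$. The analysis bounds a single row's chance of climbing past level $\ell$ only by about $e^{-10(\ell+1)}$. Decay like $e^{-cb^3/d}$ appears only for weighted sums over the rows meeting one column. Likewise, a joint exponential supermartingale for disjoint rows needs control of the cross-covariances of their increments, which the shared direction $v$ couples. Bounded individual variances do not give a product of tails.

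The missing idea is to make the exceptional set consist of columns. Whenever an alive column violates its budget, freeze it unrounded, put it in $B$, and let the walk continue on the rest. Fix a set $R$ of columns that is independent in the column graph. The paper forms one potential $\Psi$ summing $\tfrac{p_\ell}{k_\ell}|a_{i,j(i)}|\exp\{\lambda_\ell(Y_i-b_\ell)\}$ over all levels $\ell$ and all rows $i$ meeting $R$. Each term is frozen at its exit value, capped at $p_\ell^{-1}$, once the row leaves level $\ell$ or its column is frozen, so frozen columns keep their contribution.

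The quadratic correction in $Y_i$ supplies negative drift. The row-covariance property \eqref{eq:row-covariance} with parameter $\alpha_\ell$ bounds the variance of the whole weighted sum. Together these permit the exponent $\nu=cb^3/D$ with $\E e^{\nu\Psi(\tau)}\le e^{\nu|R|/500}$; the factor $(\ell+1)^2$ in $b_\ell$ makes the levels summable. On $\{R\subseteq B\}$, each column of $R$ contributes more than $1/100$ through its own disjoint set of rows, and Markov's inequality gives \eqref{eq:joint-tail}.

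This is also why the shattering is done in the column graph with independent sets of columns, rather than with rows at distance $\ge2$. Without such a lemma your key step is unsupported, and with it both \eqref{eq:fixed} and \eqref{eq:uniform}.
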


\begin{remark}[Extension to hereditary discrepancy]
The \textit{hereditary discrepancy} of a matrix is the largest discrepancy of any sub-matrix obtained by considering a subset of the columns. Applying the above theorem to every subset of columns also gives the same bounds for hereditary discrepancy.
\end{remark}

\begin{corollary}[User-friendly formulation]\label{cor:sqrt}
Fix $a>0$ and an integer $j\ge1$. Let $A\in[-1,1]^{m\times n}$ have at most $d\ge1$ nonzero entries per column and at most $\exp(a\sqrt d)$ per row. Then we obtain algorithmic (hereditary) discrepancy of $O_a(\min\{d,\sqrt d(1+\lstar n)\})$. Moreover, under the further assumption that the sparsity is mildly lower bounded, $d\ge\ell_j(n)$, we obtain algorithmic discrepancy of $O_{a,j}(\sqrt d)$.
\end{corollary}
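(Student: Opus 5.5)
We derive Corollary~\ref{cor:sqrt} directly from Theorem~\ref{thm:main} with $z=0$, plugging in $s\le\exp(a\sqrt d)$. The hereditary version then follows as in the preceding remark. A column submatrix still has at most $d$ nonzeros per column and at most $s$ per row, so the same bound applies to every subset of columns.

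\emph{The uniform bound.} Since $s\le \exp(a\sqrt d)$ and $d\ge1$, we have
\[ \log(2s)\le \log 2+a\sqrt d\le (1+a)\sqrt d. \]
Hence $\{d\log(2s)\}^{1/3}\le (1+a)^{1/3} d^{1/2}$. Inserting this into \eqref{eq:uniform} bounds the bracket by $(1+(1+a)^{1/3})\sqrt d$. This gives
\[ \norm{A\chi}_\infty\le C(1+(1+a)^{1/3})\min\{d,\sqrt d(1+\lstar n)\}, \]
where we used $\min\{d,cX\}\le \max(1,c)\min\{d,X\}$. The result is $O_a(\min\{d,\sqrt d(1+\lstar n)\})$.

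\emph{The $O_{a,j}(\sqrt d)$ bound.} Assume $d\ge \ell_j(n)$ and apply \eqref{eq:fixed} with $k=j$. It suffices to show $\ell_{j+1}(n)\lesssim \sqrt d$. By definition \eqref{eq:logs}, $\ell_{j+1}(n)=\log(e+\ell_j(n))\le \log(e+d)$. Since $\log(e+d)\le C'\sqrt d$ for all $d\ge1$, with an absolute constant $C'$, we get
\[ d\{\log(2s)+\ell_{j+1}(n)\}\le (1+a+C')\,d^{3/2}. \]
The cube root of this is $O_a(\sqrt d)$. Then \eqref{eq:fixed} yields $O_{C_j,a}(\sqrt d)=O_{a,j}(\sqrt d)$, as claimed.

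There is no real obstacle here; the only care needed is in the constants. The factor $C_k$ in Theorem~\ref{thm:main} depends on $k$, and we set $k=j$, which is why the final constant depends on $j$. We also use that $\ell_{j+1}(n)$ is controlled by $\ell_j(n)$ through a single logarithm, so the hypothesis $d\ge\ell_j(n)$ only enters logarithmically.
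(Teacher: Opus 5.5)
Your proposal is correct and follows the same route as the paper: bound $\log(2s)=O_a(\sqrt d)$ and substitute into \eqref{eq:uniform}, then for the second claim use $\ell_{j+1}(n)\le\log(e+d)=O(\sqrt d)$ in \eqref{eq:fixed} with $k=j$. Your explicit constants, and your remark that column submatrices inherit the hypotheses (implicitly using that $\ell_j$ and $\lstar$ are monotone in $n$), fill in details the paper leaves unstated.
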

\begin{proof}
Apply~\eqref{eq:uniform} with $s=\exp(a\sqrt d)$, for which $\log(2s)=O_a(\sqrt d)$. If also $d\ge\ell_j(n)$, then $\ell_{j+1}(n)\le\log(e+d)=O(\sqrt d)$, so the second claim follows from~\eqref{eq:fixed} with $k=j$.
\end{proof}

Let us briefly overview the ideas. In~\cite{BJ}, Bansal and Jiang develop a random walk in the cube $[-1,1]^n$. Coordinates are frozen as they approach signs, and linear constraints control the increase of row discrepancies. The main obstacle is to keep enough directions available for the walk to continue making progress toward a signing. The analysis of Bansal and Jiang controls both the coordinate increments and linear combinations of the increments of different rows. The latter control bounds the probability that many rows simultaneously accumulate large discrepancy.

Our modification produces a partial rounding $x$ from an arbitrary starting point $z$, with $\|A(x-z)\|_\infty=O(b)$ and all coordinates outside an exceptional set of columns $B$ rounded to signs. Proposition~\ref{prop:partial} bounds the probability that any specified collection of columns with disjoint row supports all belongs to $B$. The row-size bound then lets us decompose the submatrix obtained by restricting to $B$ into small blocks that can be rounded separately without adding their errors. Repeating this decomposition yields the iterated logarithm bounds. \\

\textbf{AI disclosure.} ChatGPT substantially aided in the proof of Proposition \ref{prop:partial}, specifically in the technical aspects of adapting the analysis in \cite{BJ}.

\section{Bootstrapping scheme}

\begin{definition}\label{def:columngraph}
The \emph{column graph} $G = ([n],E)$ of an $m\times n$ matrix $A$ has a vertex for each column of $A$, and joins two vertices when there is a row that is supported on both corresponding columns. In particular, independent sets of the column graph are collections of columns with disjoint row supports.
\end{definition}

The following decoupling estimate is proved in Section~\ref{sec:joint}; importantly, it imposes no row-size bound.

\begin{proposition}[Joint partial rounding]\label{prop:partial}
There are universal constants $c,C,c_0,d_0>0$ with the following property. Suppose $A\in[-1,1]^{m\times n}$ has absolute column sums at most $d$, $z\in[-1,1]^n$, as well as $d\ge d_0$ and $C\sqrt d\le b\le c_0d$. Then, there is an efficient algorithm computing $x\in[-1,1]^n$ and $B\subseteq[n]$ such that
\begin{equation}\label{eq:partial-error} \norm{A(x-z)}_\infty\le Cb, \qquad x_j\in\{-1,1\} \quad \forall\, j\notin B, \end{equation}
and, for every independent set $R$ in the column graph,
\begin{equation}\label{eq:joint-tail} \Prob(R\subseteq B)\le\exp\{-cb^3|R|/d\}. \end{equation}
\end{proposition}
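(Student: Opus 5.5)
We would prove Proposition~\ref{prop:partial} by running a variant of the Bansal--Jiang random walk~\cite{BJ} from the starting point $z$, stopped at a fixed horizon instead of continued until every coordinate is frozen. Concretely, we would set up a discrete Gaussian-type walk $x^{(t)}$ in $[-1,1]^n$ with $x^{(0)}=z$ and small step size $\gamma$. At each step the increment lies in a subspace $V_t$ that meets three requirements: it is orthogonal to the frozen coordinates (those with $|x_j|\ge 1-\delta$), it is orthogonal to the rows that are currently ``dangerous'' (those whose running discrepancy $|\ip{a_i}{x^{(t)}-z}|$ has exceeded a threshold proportional to $b$), and it is orthogonal to the extra block constraints of Bansal--Jiang that control the correlations between row increments. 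The horizon is chosen as $T\asymp 1/\gamma^2$, so that an unconstrained coordinate reaches $\pm1$ with constant probability. At the end we set $B$ to be the set of alive coordinates. For $j\notin B$ we round the nearly frozen coordinate to its sign; this costs at most $\delta d$ per row, which is $O(b)$ once $\delta\le b/d$. The error bound in~\eqref{eq:partial-error} then comes from two facts. Rows are blocked once they reach $Cb/2$. The per-step overshoot of a row is $O(\gamma\sqrt d)$, and this is negligible.

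The core of the argument is the dimension-counting step. We would show, as in~\cite{BJ}, that as long as a constant fraction of the alive coordinates remain, the number of blocked constraints is at most half the number of alive coordinates. This lets us choose the increment with $\E\,(\Delta x_j)^2\gtrsim\gamma^2$ for each alive $j$ on average. It also lets us apply the Bansal--Jiang ``spreading'' property (increments that are roughly isotropic on alive coordinates), which gives the per-coordinate guarantee $\E[\Delta x_j^2\mid\mathcal F_t]\ge c\gamma^2$ for all but a controlled set of $j$. The threshold $b$ enters through a union bound over rows. A row of column-weight sum at most $d$ reaches level $b$ with probability about $\exp(-b^2/(d\cdot\text{variance factor}))$. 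Combined with the correlation control, the expected number of dangerous rows involving a given column is about $d\exp(-cb^2/d)$. Restricting the parameters to $C\sqrt d\le b\le c_0 d$ is exactly what makes this bookkeeping close, and it also produces the $b^3/d$ exponent in~\eqref{eq:joint-tail}, in the form $(b/\sqrt d)^2\cdot(b/\sqrt{d})\cdot\ldots$. In outline, we would let the walk run for about $b/\sqrt d$ ``epochs''. In each epoch an alive coordinate is frozen with constant probability, unless the rows touching it become blocked. Summed over the epochs, this yields a tail of the form $\exp(-c\,b^3/d)$ once the epoch lengths are tuned so that blocking in any single epoch has probability $\le e^{-cb^2/d}$.

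For the joint tail~\eqref{eq:joint-tail}, we would exploit the fact that an independent set $R$ has disjoint row supports. The event $j\in B$ for $j\in R$ is controlled by a supermartingale built from $x_j$ together with the discrepancies of the rows containing $j$. Because the rows are disjoint, the conditional evolutions for distinct $j\in R$ involve disjoint sets of rows. We would then form the product $\prod_{j\in R}M_j^{(t)}$, where $M_j^{(t)}=\exp(\lambda\,\mathbf 1[j\text{ alive}]\cdot\text{progress})$, and show it is a supermartingale up to the factor $\exp(-cb^3/d)$ per element. The main obstacle is exactly here. The walk's increments are \emph{not} independent across coordinates, since the projection onto $V_t$ couples them. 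So we must bound the conditional moment generating function of the joint increment using only second-moment lower bounds per coordinate, plus the fact that the blocking constraints for different $j\in R$ come from disjoint rows. Our first attempt would be to show that the Bansal--Jiang covariance satisfies $\Sigma_t\succeq c\gamma^2 P_{\text{alive}}$ restricted to $R$ (up to removable coordinates). We would then use a one-step Freedman-type bound on $\sum_{j\in R}(\text{freezing progress of }j)$, which gives the product bound by Chernoff without requiring independence.
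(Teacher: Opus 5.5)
There is a genuine gap. The argument breaks at your definition of $B$ as the set of coordinates still alive at a fixed horizon $T\asymp1/\gamma^2$. By your own description, an unconstrained coordinate reaches $\pm1$ by time $T$ only with constant probability. So $\Prob(j\in B)$ is bounded below by a constant, and \eqref{eq:joint-tail} already fails for $|R|=1$. The epoch variant gives at best $\exp(-cb/\sqrt d)$, which at $b\asymp\sqrt d$ is still a constant rather than $\exp(-c\sqrt d)$; also, your factorization $(b/\sqrt d)^2\cdot(b/\sqrt d)$ equals $b^3/d^{3/2}$, not $b^3/d$.

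Running time is not the real constraint. In a Bansal--Jiang walk with $v\perp x_S$ and $\norm{v}_2=1$, $\norm{x}_2^2$ grows by $\epsilon^2$ per step, so the walk can run until at most $100$ coordinates are alive, \emph{provided} the dimension count required by Lemma~\ref{lem:covariance} holds at every step. You assert that count ``as in Bansal--Jiang'', but there it holds only with high probability. It rests on every alive column having weight at most $k_\ell$ on rows that have reached level $\ell$, and forcing this for all columns simultaneously is what costs the $(d\log n)^{1/3}$ term. A union bound over rows, as you suggest, is equally incompatible with a tail that does not depend on $n$.

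The missing idea is to enforce the budgets \eqref{eq:column-budgets} deterministically: whenever an alive column violates one, freeze it where it stands and put it in $B$. Then the dimension count always holds and the walk runs to completion. $B$ then consists not of slow coordinates but of columns carrying weight $>k_{\ell+1}$ on rows that climbed past level $\ell$, and the tail becomes a statement about those rows. (Separately, rows have no size bound here, so rounding nearly frozen coordinates costs up to $\delta n$ per row, not $\delta d$; you need $\delta\le b/n$.)

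Your joint-tail plan has a second problem. Lemma~\ref{lem:covariance} gives only the \emph{upper} bounds \eqref{eq:coordinate-covariance} and \eqref{eq:row-covariance}. The direction $v$ is a random eigenvector of an SDP solution that need not charge the coordinates of $R$ at all, so $\Sigma_t\succeq c\gamma^2P$ on $R$ is unavailable, and with $B$ defined via budgets it is beside the point. Disjoint row supports do not decouple anything by themselves either, since the increments of those rows are correlated through $v$. What is needed is an upper bound on the conditional moment generating function of correlated row increments.

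The paper uses a single potential $\Psi=\sum_\ell\frac{p_\ell}{k_\ell}\sum_{i\in\mathcal I}w_iX_i^{(\ell)}$ over the rows meeting $R$. Here $X_i^{(\ell)}=\exp\{\lambda_\ell(Y_i-b_\ell)\}$, with $Y_i$ the quadratically corrected row potential, while the row is medium at level $\ell$ and its column is alive. On exit, $X_i^{(\ell)}$ is frozen and capped at $p_\ell^{-1}$, so contributions survive the freezing of columns. The quadratic correction gives drift $-\epsilon^2\mathcal D$. Applying \eqref{eq:row-covariance} with coefficients $w_iX_i^{(\ell)}$ gives $\E[Z_\ell^2\mid\mathcal F_t]\lesssim\frac{\alpha_\ell\lambda_\ell}{\theta_\ell k_\ell}\mathcal D_\ell\lesssim\frac{D}{b^3(\ell+1)^3}\mathcal D_\ell$. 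This is summable over levels by Cauchy--Schwarz, thanks to the factor $(\ell+1)^2$ in $b_\ell$. Hence $e^{\nu\Psi}$ is an approximate supermartingale for $\nu=c_1b^3/D$, and this is the true source of the exponent $b^3/d$.

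Finally, each $j\in R\cap B$ has weight $>k_\ell/100$ on rows whose $X_i^{(\ell)}$ reached the cap, so it contributes more than $1/100$ to $\Psi(\tau)$. Disjointness of row supports makes these contributions add, so $\{R\subseteq B\}\subseteq\{\Psi(\tau)>|R|/100\}$ while $\Psi(0)<|R|/1000$. Markov's inequality then yields \eqref{eq:joint-tail}.
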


We now convert the joint tail into a bound on the sizes of the new discrepancy instances that arise in the bootstrapping. This is the sole place that the row-size bounds are needed.

\begin{lemma}[Small subproblems]\label{lem:components}
Let $A$ have $N$ columns and suppose its column graph $G:=G(A)$ has maximum degree at most $\Delta\ge1$. Let $B\subseteq[N]$ be random, and write $G[B]$ for the subgraph induced by $B$. Suppose that, for some $K>0$ and every independent set $R$ of $G$,
\[ \Prob(R\subseteq B)\le e^{-K|R|}. \]
Then, if $K\ge2\log(4\Delta^2)$, there is some constant $C>0$ for which it holds with probability at least $3/4$ that every connected component of $G[B]$ has size at most
\begin{equation}\label{eq:component-size} C(\Delta+1)\left(1+\frac{\log(4N)}K\right). \end{equation}
\end{lemma}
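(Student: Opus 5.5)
The plan is a first-moment count of large independent subsets that any large component must contain, in the style of the Beck / Alon lopsided-LLL component-size argument. The key fact is that a connected set $S$ in a graph of maximum degree $\Delta$ contains an independent set of size at least $|S|/(\Delta+1)$. Moreover, one can choose such an independent set to be \emph{$2$-connected}: connected in the graph $G^{(2)}$ where vertices at distance at most $2$ in $G$ are joined. To get it, run a greedy procedure inside $S$. Pick a vertex, delete its closed neighborhood, and then always pick the next vertex among those at distance exactly $2$ (within $G[S]$) from the already chosen set. Connectivity of $S$ guarantees that such a vertex exists until $S$ is exhausted, and each step removes at most $\Delta+1$ vertices. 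So if $G[B]$ has a component of size at least $L$, then $B$ contains an independent set $R$ of $G$ with $|R| = r := \lceil L/(\Delta+1)\rceil$ that is connected in $G^{(2)}$. Here we use that $R\subseteq S\subseteq B$, and we may truncate the greedy sequence at $r$ vertices, which keeps it $G^{(2)}$-connected.

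Next I count such sets. The graph $G^{(2)}$ has maximum degree at most $\Delta + \Delta(\Delta-1)\le \Delta^2$. The number of connected vertex subsets of size $r$ containing a fixed vertex in a graph of maximum degree $D$ is at most $(eD)^{r-1}$; equivalently, one can use the crude bound $(4D)^{r}$ obtained by encoding a spanning tree by a closed walk of length $2(r-1)$. Hence the number of candidate sets $R$ is at most $N(4\Delta^2)^r$, possibly with $e$ in place of $4$; the constant is absorbed by the hypothesis.

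A union bound with the hypothesis $\Prob(R\subseteq B)\le e^{-K|R|}$ then gives
\[
\Prob\bigl(\exists \text{ component of size}\ge L\bigr)\le N (4\Delta^2)^r e^{-Kr} = N\exp\{-r(K-\log(4\Delta^2))\}\le N e^{-Kr/2},
\]
using $K\ge 2\log(4\Delta^2)$. This is at most $1/4$ once $r\ge 2\log(4N)/K$. Therefore it suffices to take
\[
L = (\Delta+1)\bigl(1 + 2\log(4N)/K\bigr),
\]
which gives \eqref{eq:component-size} with $C=2$, up to rounding. This choice ensures $r\ge 1$ and $r\ge 2\log(4N)/K$.

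The main obstacle is the tree-counting step. I must make sure that the extracted independent set is connected in a graph whose degree is bounded by $\Delta^2$, and not merely the original component. The hypothesis only controls independent sets, so I cannot count connected subsets of $G[B]$ directly with the probability bound. The distance-$2$ greedy extraction resolves this, and I would check its details carefully: at each step, if unchosen and undeleted vertices of $S$ remain, connectivity of $S$ gives a path from the chosen set, and the first undeleted vertex on it is at distance $2$ from some chosen vertex. With that in hand, the counting constant $4\Delta^2$ matches the stated threshold on $K$.
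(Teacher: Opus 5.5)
Your proposal is correct and follows the paper's proof essentially step for step. It uses the same distance-two greedy extraction of an independent set $R$ of size $r$ from a component with more than $(\Delta+1)(r-1)$ vertices, the same count of candidate sets, and a union bound using $K\ge2\log(4\Delta^2)$ to get $Ne^{-Kr/2}\le1/4$. The paper does the count directly via rooted plane trees (at most $4^{r-1}$ Catalan shapes, $N$ root labels, at most $\Delta^2$ labels per child), giving $N(4\Delta^2)^{r-1}$; this is the up/down-pattern-plus-labels form of your closed-walk encoding, and your slightly looser $N(4\Delta^2)^{r}$ together with $L=(\Delta+1)(1+2\log(4N)/K)$ simply gives the explicit constant $C=2$.
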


The idea is to associate with each possible connected component of $G[B]$ a single large independent set $R$ of $G$ that has a specific tree structure and must be a subset of $B$. Union bounding over all such trees will give the result.

\begin{proof}
Suppose a connected component $S$ of $G[B]$ has more than $(\Delta+1)(r-1)$ vertices. Starting from any vertex of $S$, select vertices one at a time from $S$, each at $G[B]$-distance two from the previously selected set. If fewer than $r$ vertices have been selected, their closed neighborhoods contain at most $(\Delta+1)(r-1)$ vertices. Hence some vertex of $S$ is outside those neighborhoods. Connectivity of $S$ gives a vertex at distance exactly two from the selected set, so the process can continue until $r$ vertices have been selected.

The selected set $R$ is an independent set. We use $R$ to construct a rooted tree: start from the first vertex added to $R$ and link each subsequent vertex to an earlier vertex at distance two. Ordering the children at each vertex gives a rooted plane tree. The number of such tree shapes with $r$ vertices is the Catalan number $\frac1r\binom{2r-2}{r-1}\le4^{r-1}$~\cite{StanleyEC2}. There are $N$ choices of label for the root, and at most $\Delta^2$ choices for the label of each child vertex. Thus there are at most $N(4\Delta^2)^{r-1}$ possible sets $R$. Each is contained in $B$ with probability at most $e^{-Kr}$, so a union bound gives for every integer $r\ge1$,
\begin{equation}\label{eq:component-probability} \Prob\bigl(G[B]\text{ has a connected component of size } >(\Delta+1)(r-1)\bigr) \le N(4\Delta^2)^{r-1}e^{-Kr}. \end{equation}
In particular, under the stated condition on $K$, the right-hand side is at most $Ne^{-Kr/2}$. Taking $r=\lceil2\log(4N)/K\rceil$ makes this at most $1/4$ and gives \eqref{eq:component-size}.
\end{proof}

Finally, we record the original linear-algebraic bound of Beck and Fiala; while such language was not used explicitly in their original paper, it is immediate from their proof that the result is algorithmic.

\begin{proposition}[Beck--Fiala~\cite{BF}]
Let $F\in \R^{m\times q}$, with each column bounded in $\ell_1$ by $d$. Then, for every $z\in[-1,1]^q$,
\begin{equation}\label{eq:linear} \min_{\chi\in\{-1,1\}^q} \norm{F(\chi-z)}_\infty \le 2d. \end{equation}
Moreover, this is achieved with an efficient algorithm.
\end{proposition}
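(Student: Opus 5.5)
The plan is to run the classical Beck--Fiala linear-algebraic iteration, starting from the fractional point $z$ and moving only through the cube $[-1,1]^q$. Coordinates that reach $\pm1$ are frozen, and each row is kept exactly at its original value while it is still ``large,'' meaning it has many live variables. Throughout we maintain $x\in[-1,1]^q$ with $x=z$ at the start. A coordinate $j$ is \emph{alive} if $|x_j|<1$. A row $i$ is \emph{active} if $\sum_{j\text{ alive}}|F_{ij}|>d$.

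The first step is a counting argument showing that active rows are fewer than alive columns. Let $L$ be the set of alive columns. Summing the $\ell_1$ masses,
\[
(\#\text{active rows})\cdot d<\sum_{i\text{ active}}\sum_{j\in L}|F_{ij}|\le \sum_{j\in L}\norm{F_{\cdot j}}_1\le d|L|,
\]
so the number of active rows is strictly less than $|L|$. Consider the linear system $\sum_{j\in L}F_{ij}y_j=0$ for each active row $i$, in variables $y\in\R^L$ with $y_j=0$ off $L$. It has a nonzero solution $y$, found by Gaussian elimination. We move $x\mapsto x+ty$ with $t$ the largest value keeping $x\in[-1,1]^q$. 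This freezes at least one new coordinate and leaves every active row sum $\sum_j F_{ij}x_j$ unchanged. Rows only become inactive, never active again, because alive sets shrink. After at most $q$ iterations every coordinate is $\pm1$, which gives $\chi$. Each step is polynomial, so the procedure is efficient.

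The second step bounds the error row by row. Fix a row $i$ and consider the moment it becomes inactive; if it never does, its value is exactly preserved and the error is $0$. Up to that moment $(F(x-z))_i=0$. Afterwards, only coordinates alive at that moment change. Each such change is at most $2$ in absolute value, with the sharper bound $|\chi_j-x_j|\le 2$ even accounting for the current position. Their total weight is $\sum_{j\text{ alive}}|F_{ij}|\le d$. Hence $|(F(\chi-z))_i|\le 2d$.

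This gives exactly \eqref{eq:linear}. The only delicate point, and it is a mild one, is the strict inequality in the counting step, which is what guarantees a nontrivial kernel direction. It comes from defining active as \emph{strictly} greater than $d$. The inactive-row bound then uses $\le d$, which is consistent with that definition. A degenerate case is when $L$ is nonempty but there are no active rows. Then any $y=e_j$ with $j\in L$ works, so progress never stalls.
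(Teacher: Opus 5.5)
Your proof is correct. It is the classical Beck--Fiala floating-variable argument, adapted to a fractional starting point $z$ and real entries: move in the kernel of the rows whose alive $\ell_1$-mass exceeds $d$, freeze coordinates at $\pm1$, and allow at most $2d$ of drift once a row drops below the threshold. The paper gives no separate proof and simply appeals to Beck and Fiala's original argument, noting it is constructive, so your approach matches it (including your separate handling of the case with no active rows).
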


We are ready to prove the main theorem.

\begin{proof}[Proof of Theorem~\ref{thm:main}]
We repeatedly apply the partial-rounding algorithm, decompose the remaining columns into blocks, and continue rounding each block from its current fractional point. We first explain this decomposition, then bound the block sizes, and finally choose when to stop recursing and complete the rounding.

Consider a current subproblem: a column submatrix $F$ of $A$ with $N$ columns and a starting point $y\in[-1,1]^N$. Initially, $F=A$, $N=n$, and $y=z$. For an admissible parameter $b$, Proposition~\ref{prop:partial} produces $x$ and $B$ with
\[ \norm{F(x-y)}_\infty\le Cb, \qquad x_j\in\{-1,1\}\quad(j\notin B). \]
Keep the signs outside $B$, and remove any already integral coordinates from $B$. If $B$ is empty, this subproblem is complete. Otherwise, let the blocks be the connected components $S$ of $G[B] := G(F)[B]$, the column graph of $F$ restricted to $B$. For each block, we must round $F_S$, the submatrix obtained by restricting $F$ to the columns in $S$, starting from $x_S$.

By construction, no row has support in two different blocks. Consequently, completing these roundings gives
\[ \norm{F(\chi-y)}_\infty \le Cb+\max_S \norm{F_{S}(\chi_S-x_S)}_\infty. \]
Crucially, the errors from different blocks at the same level do not add. Thus each recursive level of the argument adds at most $Cb$ to the discrepancy. Finally, as every block trivially inherits the original bounds $d$ and $s$, we can apply the same procedure again.

We now choose $b$ to ensure that the blocks are small. Take:
\begin{equation}\label{eq:b-choice} b:=C\left(\sqrt d+\{d\log(2ds)\}^{1/3}\right). \end{equation}
Since $(d\log(2d))^{1/3}=O(\sqrt d)$, this satisfies $b=O(\sqrt d+\{d\log(2s)\}^{1/3})$. If $d<d_0$ or $b>c_0d$, the bound $2d$ from \eqref{eq:linear} already proves both conclusions. Otherwise, Proposition~\ref{prop:partial} applies.

The column graph of a subproblem has maximum degree at most $d(s-1)$. With $\Delta=\max\{1,d(s-1)\}$, our choice of $b$ ensures
\[ K:=cb^3/d\ge2\log(4\Delta^2). \]
Lemma~\ref{lem:components} therefore guarantees, with probability at least $3/4$, that every resulting block has at most
\begin{equation}\label{eq:g} g(N):=\left\lceil Cds\log(eN)\right\rceil \end{equation}
columns. If this test fails, discard the output and rerun the partial-rounding algorithm on the same subproblem $F$, from the same starting point $y$, with fresh randomness.

To track the recursion, let $N_0=n$ and $N_{i+1}=g(N_i)$. After $r$ levels of recursion, every unfinished block has at most $N_r$ columns, and the accumulated discrepancy in each row is at most $Crb$. The following simple estimate will handle both conclusions. Choose $a=Cds\ge8$ large enough that $g(x)\le a(1+\log x)$. For $u\ge1$,
\[ g(a^2u) \le a(1+2\log a+\log u) \le a^2\log(e+u). \]
Here we used $1+2\log a\le a-1$. By induction and \eqref{eq:logs},
\[ N_r\le a^2\ell_r(n)\qquad(r\ge0). \]

It remains to complete the rounding of these blocks. For a nonempty block with $N$ columns, apply Proposition~\ref{prop:partial} with $b_N=C(\sqrt d+\{d\log(2N)\}^{1/3})$. Finally, applying \eqref{eq:joint-tail} to singletons and using a union bound gives
\[ \Prob(B\ne\varnothing) \le N\exp(-cb_N^3/d)\le\frac14. \]
Repeat this final application of Proposition~\ref{prop:partial}, each time from the same initial point of the block and with fresh randomness, until $B=\varnothing$. The accepted output is a complete rounding. Of course, if $b_N>c_0d$ then \eqref{eq:linear} can be used instead. Including the option of always using \eqref{eq:linear}, we can therefore finish any block with additional error at most
\begin{equation}\label{eq:terminal} T(N):=C\min\left\{d,\sqrt d+\{d\log(2N)\}^{1/3}\right\}. \end{equation}
Thus, after $r$ recursive levels followed by these final roundings,
\begin{equation}\label{eq:recurrence} \norm{A(\chi-z)}_\infty \le Crb+T(N_r). \end{equation}

For the fixed-depth bound, take $r=k$. Our size estimate gives
\[ \log(2N_k) \le\log(2a^2)+\log\ell_k(n) \le C\{\log(2ds)+\ell_{k+1}(n)\}. \]
Substituting into \eqref{eq:recurrence}, absorbing $(d\log(2d))^{1/3}$ into $O(\sqrt d)$ proves \eqref{eq:fixed}.

For the uniform bound, take $r=\lstar n$. By definition, $\ell_r(n)\le e$, so every remaining block has at most $ea^2$ columns. Since $T(ea^2)=O(b)$, \eqref{eq:recurrence} gives
\[ \norm{A(\chi-z)}_\infty\le Cb(1+\lstar n). \]
Absorbing the $\log d$ term as above and again comparing with \eqref{eq:linear} proves \eqref{eq:uniform}.

Finally, each acceptance test succeeds with probability at least $3/4$, so retries have constant expected overhead. At each depth the subproblems have disjoint column sets, and there are at most $n$ nonempty subproblems. Thus the algorithm has polynomial expected running time. The geometric tails of the retry counts also give polynomial running time with high probability.
\end{proof}

\section{The modified algorithm}\label{sec:joint}

We prove Proposition~\ref{prop:partial} by modifying the Bansal--Jiang walk~\cite[Section~5]{BJ}. We retain many of the key notions and terminology from their analysis, such as the discrepancy levels, quadratically corrected row potential, and covariance bounds. Their analysis shows that, with high probability, every column satisfies the required bounds on its weight on high-level rows. The main departure is that we modify the algorithm to allow for exceptional columns: when a column violates such a bound, we freeze its coordinate and continue the walk on the remaining coordinates. This produces a partial rounding whose exceptional coordinates can be handled by our bootstrapping scheme, provided we establish the joint estimate~\eqref{eq:joint-tail}. We emphasize that the rest of this section, while containing some new ideas about the partial coloring and handling of columns, is largely adapted from the existing analysis of \cite{BJ}.

\subsection{The covariance lemma}

\begin{lemma}[Bansal--Jiang, \cite{BJexp}, Theorem~2.12]
\label{lem:covariance}
Let $h\ge1$, and let $W\subseteq\R^h$ be a linear subspace with $\dim W\le h/10$. For each $\ell$ in a finite index set, let $u_{\ell,1},\ldots,u_{\ell,m_\ell}\in\R^h$, and suppose $\alpha_\ell>0$ satisfy $\sum_\ell m_\ell/\alpha_\ell\le h/10$. There is a mean-zero random unit vector $v\in W^\perp$ such that
\begin{align}
  \E\ip{a}{v}^2 &\le 6\sum_{j=1}^h a_j^2\E v_j^2 &&(a\in\R^h),\label{eq:coordinate-covariance}\\
  \E\left(\sum_{i=1}^{m_\ell} c_i\ip{u_{\ell,i}}{v}\right)^2 &\le\alpha_\ell\sum_{i=1}^{m_\ell} c_i^2\E\ip{u_{\ell,i}}{v}^2 &&(\text{each }\ell,\ c\in\R^{m_\ell}).\label{eq:row-covariance}
\end{align}
Its distribution can be constructed from a semidefinite program of polynomial size.
\end{lemma}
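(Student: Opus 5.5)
The plan is to reduce the lemma to the existence of a suitable covariance matrix, and then prove that existence by SDP duality together with a dimension-counting argument. Throughout, $\operatorname{diag}(M)$ denotes the diagonal part of a matrix $M$, and $U_\ell$ denotes the $h\times m_\ell$ matrix with columns $u_{\ell,1},\ldots,u_{\ell,m_\ell}$.

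\emph{Step 1: reduction to a covariance matrix.} It suffices to produce a matrix $X\succeq0$ satisfying the following four conditions:
\begin{align*}
&\operatorname{tr}X=1, \qquad X\Pi_W=0 \quad (\Pi_W \text{ the orthogonal projection onto } W),\\
&X\preceq 6\operatorname{diag}(X), \qquad U_\ell^{\top}XU_\ell\preceq\alpha_\ell\operatorname{diag}(U_\ell^{\top}XU_\ell)\ \text{ for every }\ell.
\end{align*}
Given such an $X$, write its spectral decomposition $X=\sum_i\lambda_iw_iw_i^{\top}$. Let $v=\pm w_i$ with probability $\lambda_i/2$ for each sign. Then $v$ is a unit vector, $v\in W^\perp$, $\E v=0$ and $\E vv^{\top}=X$. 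Conditions \eqref{eq:coordinate-covariance} and \eqref{eq:row-covariance} are exactly the two displayed Loewner inequalities, tested against $a$ and against $c$ respectively. Since all the constraints are linear matrix inequalities in $X$, such an $X$, once it is known to exist, can be found by a polynomial-size SDP up to negligible error. So everything reduces to proving feasibility.

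\emph{Step 2: feasibility via duality.} Suppose no such $X$ exists. Strong duality for this strictly feasible cone program (or a Hahn--Banach separation on the compact spectrahedron $\{X\succeq0,\ \operatorname{tr}X=1,\ X\Pi_W=0\}$) then yields PSD multipliers $Y_0\in\R^{h\times h}$ and $Y_\ell\in\R^{m_\ell\times m_\ell}$ such that
\[
M:=\bigl(6\operatorname{diag}(Y_0)-Y_0\bigr)+\sum_\ell U_\ell\bigl(\alpha_\ell\operatorname{diag}(Y_\ell)-Y_\ell\bigr)U_\ell^{\top}
\]
is negative definite on $W^\perp$. The task is to contradict this by exhibiting a unit vector $y\in W^\perp$ with $y^{\top}My\ge0$.

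\emph{Step 3: counting ``bad'' directions.} The key observation is that for a PSD $m\times m$ matrix $Y$, the matrix $\alpha\operatorname{diag}(Y)-Y$ is ``negative'' in only about $m/\alpha$ directions, because $\operatorname{tr}\operatorname{diag}(Y)=\operatorname{tr}Y$. Concretely, let $P$ be the span of the eigenvectors of $Y$ with eigenvalue above $(\alpha/2)\cdot\operatorname{tr}(Y)/m$ (suitably weighted). Then $\dim P\le 2m/\alpha$, and on $P^\perp$ the negative part $Y$ is dominated by half of a weighted trace of $\alpha\operatorname{diag}(Y)$. I would make this precise by a reweighting/averaging argument. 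Pulling these bad subspaces back through $U_\ell$, and doing the same for $Y_0$ with $\alpha=6$, gives a total excluded dimension of at most
\[
\dim W+\tfrac{2h}{6}+\sum_\ell\frac{2m_\ell}{\alpha_\ell}\le\tfrac{h}{10}+\tfrac{h}{3}+\tfrac{h}{5}<h .
\]
Hence a subspace $V\subseteq W^\perp$ of dimension at least $h/3$ remains. Averaging $y^{\top}My$ over a uniformly random unit $y\in V$ (equivalently, computing $\operatorname{tr}(\Pi_VM)$) and using that each diagonal term $\operatorname{tr}(\Pi_V\operatorname{diag}(\cdot))$ beats the corresponding $\operatorname{tr}(\Pi_V Y)$ should give $\operatorname{tr}(\Pi_VM)\ge0$, which is the desired contradiction.

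\emph{Main obstacle.} The delicate point is Step 3. The diagonal terms are measured in the standard basis, whereas the excluded subspaces are spectral. Consequently $\operatorname{tr}(\Pi_V\operatorname{diag}(Y))$ need not be comparable to $\operatorname{tr}(\Pi_VY)$ for an arbitrary $V$, so the plain trace comparison above does not close by itself.

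If that comparison fails, I would fall back on the approach in \cite{BJexp}. There one chooses $V$ iteratively, by a multiplicative-weights / potential argument on the diagonal weights, so that the coordinate mass of $V$ is spread out (every $(\Pi_V)_{jj}$ is comparable to $\dim V/h$). This spreading is exactly what converts the diagonal terms into a multiple of the trace. The constants $6$ and $h/10$ leave enough room to absorb the resulting factor-two losses.
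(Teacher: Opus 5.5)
Your Step~1 is exactly the paper's conversion: the paper takes a nonzero PSD $U$ with $Uw=0$ on $W$, $U\preceq 6\operatorname{diag}(U)$ and $E_\ell UE_\ell^\top\preceq\alpha_\ell\operatorname{diag}(E_\ell UE_\ell^\top)$, and samples $v=\pm e_t$ with probability $\rho_t/\operatorname{tr}U$. The separation version of your Step~2 is also correct. The paper, however, never proves feasibility. It invokes \cite{BJexp}, Theorem~2.12, with $\delta=1/10$, $\kappa=1/6$, $\eta=6$, after checking $\eta^{-1}+\kappa+\sum_\ell m_\ell/(h\alpha_\ell)\le\frac16+\frac16+\frac1{10}<1-\delta$. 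Since you chose to prove feasibility yourself, Step~3 is a genuine gap, as you acknowledge. Spectral subspaces of $Y$ itself are the wrong object, and there is no reason for $\operatorname{tr}(\Pi_V\operatorname{diag}(Y))$ to dominate $\operatorname{tr}(\Pi_VY)$. The multiplicative-weights fallback is a gesture, not an argument.

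The missing idea is to normalize by the diagonal rather than the trace, and then count inertia; no averaging is needed. For $Y\succeq0$ of size $m$, let $T=\{i:Y_{ii}>0\}$. The rows and columns of $Y$ outside $T$ vanish, and on $T$ one has $\alpha\operatorname{diag}(Y)-Y=D^{1/2}(\alpha I-Z)D^{1/2}$. Here $D=\operatorname{diag}(Y)_{TT}$ and $Z=D^{-1/2}Y_{TT}D^{-1/2}\succeq0$ has unit diagonal. By Sylvester's law of inertia, the number of negative eigenvalues of $\alpha\operatorname{diag}(Y)-Y$ equals the number of eigenvalues of $Z$ exceeding $\alpha$, which is at most $\operatorname{tr}Z/\alpha\le m/\alpha$. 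So $\alpha\operatorname{diag}(Y)-Y$ is positive semidefinite on $N^\perp$ for some subspace $N$ of dimension at most $m/\alpha$.

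Do this for $Y_0$ with $\alpha=6$ and for each $Y_\ell$ with $\alpha_\ell$. The subspace $V=W^\perp\cap N_0^\perp\cap\bigcap_\ell(U_\ell N_\ell)^\perp$ then has dimension at least $h-\frac h{10}-\frac h6-\frac h{10}>0$. For $y\in V$, every summand of $y^\top My$ is nonnegative: the block terms are $(U_\ell^\top y)^\top(\alpha_\ell\operatorname{diag}(Y_\ell)-Y_\ell)(U_\ell^\top y)$ with $U_\ell^\top y\perp N_\ell$. Any nonzero $y\in V$ therefore contradicts negative definiteness of $M$ on $W^\perp$.

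With this replacement, your duality route becomes a complete, self-contained proof of the lemma as stated, without your factor-two losses. The lemma only needs a nonzero $U$, not the trace lower bound encoded by $\kappa$ in the cited theorem. Alternatively, citing the theorem with the parameter check above reproduces the paper's proof.
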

\begin{proof}
Let $E_\ell$ be the matrix with rows $u_{\ell,i}$. Ignoring empty groups, apply the cited theorem with $\delta=1/10$, $\kappa=1/6$, and $\eta=6$. Its hypothesis is satisfied since
\[ \eta^{-1}+\kappa+ \sum_\ell\frac{m_\ell}{h\alpha_\ell} \le\frac16+\frac16+\frac1{10}<1-\delta. \]
It gives a nonzero positive semidefinite matrix $U$ with $Uw=0$ for $w\in W$ and
\[ U\preceq6\operatorname{diag}(U),\qquad E_\ell U E_\ell^\top \preceq\alpha_\ell \operatorname{diag}(E_\ell U E_\ell^\top). \]
Here $\operatorname{diag}(M)$ retains only the diagonal of $M$, and $M\preceq N$ means $N-M\succeq0$. For an orthonormal eigendecomposition $U=\sum_t\rho_t e_te_t^\top$, choose $t$ with probability $\rho_t/\operatorname{tr}U$ and take $v=\pm e_t$ with equal probabilities. Then $\E vv^\top=U/\operatorname{tr}U$, proving both claims.
\end{proof}

\subsection{The walk}

In order to avoid working with absolute values, we vertically concatenate $-A$ and $A$, write $a_i$ for the resulting signed rows, and let $D=2d$. In this notation, $\sum_i|a_{ij}|\le D$ for each column $j$. It suffices to bound the increase of every signed row sum.

Let $S$ be the set of \emph{alive variables}, namely the coordinates eligible for updates during the walk. Following~\cite[Sections~3.1 and~5.1]{BJ}, say a row's \emph{size} is
\[ s_i(S):=\sum_{j\in S}|a_{ij}|. \]
The size of a row bounds its possible discrepancy increments. Since $S$ shrinks over time, sizes cannot increase.

In addition to characterizing rows based on their size, we also categorize them into ``levels'' based on their discrepancy. For a potential function $Y$ and a sequence of thresholds $b_\ell$ defined shortly, the level $\ell_i$ of row $i$ starts at zero and increases when the potential $Y_i$ reaches the next threshold $2b_{\ell_i}$ (and some size constraints, depending on $\ell$, are met). In particular, the level of a row never decreases, even if its discrepancy falls. After a row reaches level $\ell_i=L+1$, we constrain $Y_i$ to be non-increasing until enough variables have been frozen that $s_i(S) \le b$. Of course, the support-based notion of ``size'' and the discrepancy-based notion of ``level'' should interact: even if a row has large level, it is not a source for worry unless its size is significant. Correspondingly, we use level-dependent thresholds to separate medium- and large-sized rows. In our modification, we also freeze variables when their columns have too much support on the set of rows of high level.

\paragraph{Parameters.}

The algorithm parameters are:
\begin{align}
  k_\ell&=D100^{-\ell}, &L&=\min\{\ell\ge0:k_\ell\le b\},\label{eq:walk-par-1}\\
  b_\ell&=b(\ell+1)^2 5^{-\ell}, &H_\ell&=100\,2^\ell k_\ell,\label{eq:walk-par-2}\\
  \beta_\ell&=b_\ell/H_\ell, &\alpha_\ell&=A_0\,2^\ell\max\{1,k_\ell/b\}, \qquad 0\le\ell\le L.\label{eq:walk-par-3}
\end{align}

In brief, $k_\ell$ is used to bound $\sum_{i:\ell_i\ge\ell}|a_{ij}|$, i.e., how much $x_j$ affects rows that have reached discrepancy level $\ell$; $L+1$ is the final level, at which we keep $Y_i$ nonincreasing while the row size exceeds $b$; $H_\ell$ is the size threshold separating medium and large for rows with discrepancy level $\ell$, and $2b_\ell$ is the discrepancy threshold at which a medium row advances levels. The coefficient $\beta_\ell$ is used in the potential, and $\alpha_\ell$ is used in Lemma \ref{lem:covariance} to bound the variance of the increments of medium rows.

Here $A_0\ge40$ is an absolute constant, and $k_{L+1}:=k_L/100$. By adjusting the universal constants in Proposition~\ref{prop:partial}, we may assume
\begin{equation}\label{eq:walk-par-bounds} b/100<k_L\le b,\qquad k_L\ge100,\qquad \beta_\ell\le\tfrac12,\qquad \sum_{\ell=0}^L b_\ell=O(b). \end{equation}
The choices of $k_\ell$ and $H_\ell\asymp2^\ell k_\ell$ follow \cite[Section~5.1]{BJ}, while our $b_\ell$ has the additional factor $(\ell+1)^2$, making the variances summable in~\eqref{eq:all-variance}. Moreover, our $b$ is both the discrepancy target as well as the size threshold below which rows need no further constraints; two separate parameters are used in \cite{BJ} for these quantities.

\paragraph{The potential.}
As in \cite{BJ}, a row at level $\ell\le L$ is called \emph{large} if $s_i(S)>H_\ell$, and \emph{medium} if $b<s_i(S)\le H_\ell$. When a row first becomes medium at each level $\ell$, we record the value $c_i:=\ip{a_i}{x}$ and define
\begin{equation}\label{eq:Y}
\begin{split}
  Y_i&=\ip{a_i}{x}-c_i +\beta_\ell\sum_{j\in S}a_{ij}^2(1-x_j^2),\\
  u_i&=(a_{ij}-2\beta_\ell a_{ij}^2x_j)_{j\in S}.
\end{split}
\end{equation}
Thus $Y_i$ is the increase over the recorded row sum, plus a nonnegative quadratic term used in~\cite{BJ}. The latter decreases in expectation under a random move and supplies negative expected drift. When $i$ enters $\mathcal M_\ell$, we have $Y_i\le\beta_\ell H_\ell=b_\ell$. The vector $u_i$ is its gradient in the moving coordinates: expanding the squares, with $S$ fixed, gives
\begin{equation}\label{eq:delta-Y} Y_i(x+\epsilon v)-Y_i(x) =\epsilon\ip{u_i}{v} -\beta_\ell\epsilon^2\sum_{j\in S}a_{ij}^2v_j^2. \end{equation}
On reaching the final level $L+1$, $c_i$ is left unchanged (it retains the value recorded when the row entered $\mathcal M_L$), and $\beta_L$ is used in \eqref{eq:Y}. While $s_i(S)>b$, we require $\ip{u_i}{v}=0$, so $Y_i$ is nonincreasing.

\paragraph{Freezing columns.}
For every coordinate $j\in S$, we require
\begin{equation}\label{eq:column-budgets} \sum_{i:\ell_i\ge\ell}|a_{ij}|\le k_\ell \qquad(1\le\ell\le L+1). \end{equation}
If this condition fails, we freeze $x_j$ and remove $j$ from $S$ and put it into $B$. Recall that a row's level never decreases, even if its size falls below $b$, and hence remains included in the sum. We track three key sets of the rows:
\[
\begin{split}
  \mathcal M_\ell&=\{i:\ell_i=\ell,\ b<s_i(S)\le H_\ell\} \qquad(0\le\ell\le L),\\
  \mathcal H&=\{i:\ell_i\le L,\ s_i(S)>H_{\ell_i}\},\\
  \mathcal T&=\{i:\ell_i=L+1,\ s_i(S)>b\}.
\end{split}
\]
Here $\mathcal M_\ell$ is the medium rows at level $\ell$, $\mathcal H$ is the set of large rows across all levels, and $\mathcal T$ is the set of rows at level $L+1$ whose size still exceeds $b$. The walk keeps $\langle a_i,x\rangle$ constant for $i\in\mathcal H$ and keeps $Y_i$ nonincreasing for $i\in\mathcal T$.

We now give the algorithm. Whenever needed, $Y_i$ and $u_i$ are evaluated at the current $x,S$, using the recorded value $c_i$ and the coefficient $\beta_{\ell_i}$ (or $\beta_L$ when $\ell_i=L+1$).

\begin{algorithm}[H]
\caption{Partial rounding}\label{alg:partial}
\begin{algorithmic}[1]
\Require Signed rows $(a_i)_i$, initial point $z$, the parameters above, and step size $0<\epsilon<1$, chosen below.
\State Set $x=z$, $B=\varnothing$, $S=\{j:|z_j|<1-\epsilon\}$, and $\ell_i=0$ for every row.
\State Compute $\mathcal M_\ell,\mathcal H,\mathcal T$. For each $i\in\mathcal M_0$, record $c_i=\ip{a_i}{x}$.
\While{$|S|>100$}
\State Recompute $u_i$ for $i\in\mathcal T\cup\bigcup_{\ell=0}^L\mathcal M_\ell$ and set $W=\operatorname{span}\bigl(\{x_S\}\cup \{(a_i)_S:i\in\mathcal H\}\cup\{u_i:i\in\mathcal T\}\bigr)$.
\State Sample $v\in\R^S$ using Lemma~\ref{lem:covariance} with $W$, the groups $(u_i)_{i\in\mathcal M_\ell}$, and the parameters $(\alpha_\ell)_{\ell=0}^L$ from~\eqref{eq:walk-par-3}.
\State Set $x_S\gets x_S+\epsilon v$.
\State Using this new $x_S$, record $\mathcal Q=\{(i,\ell):i\in\mathcal M_\ell,\ Y_i\ge2b_\ell\}$.
\For{each $(i,\ell)\in\mathcal Q$, in fixed row order}
\State Set $\ell_i\gets\ell+1$.
\State ``Freeze'' each $x_j$ violating \eqref{eq:column-budgets} with $j\in S$ by moving $j$ from $S$ to $B$.
\EndFor
\State ``Freeze'' each $x_j$ with $j\in S$ and $|x_j|\ge1-\epsilon$ by removing $j$ from $S$ without putting it in $B$.
\State Recompute $\mathcal M_\ell,\mathcal H,\mathcal T$. For each row $i$ newly entering some $\mathcal M_\ell$, set $c_i=\ip{a_i}{x}$.
\EndWhile
\State Round the at most $100$ coordinates remaining in $S$ arbitrarily to signs.
\State Round each frozen coordinate $x_j$ with $j \not\in B$ to the nearest sign. Leave $x_j$ unchanged for $j\in B$.
\State \Return $(x,B)$.
\end{algorithmic}
\end{algorithm}

The list $\mathcal Q$ is formed before any columns are removed, and every recorded pair is processed even if earlier removals change the row's size. Column-condition violations take priority over freezing near $\pm1$.

Here, setting $W=\operatorname{span}\bigl(\{x_S\}\cup \{(a_i)_S:i\in\mathcal H\}\cup\{u_i:i\in\mathcal T\}\bigr)$ in Lemma \ref{lem:covariance} makes the statement precise about fixing row-sums for $i \in \mathcal{H}$ and forcing $Y_i$ to be non-increasing for $i \in \mathcal{T}$. The sampling of $v$, which is black-boxed as Lemma \ref{lem:covariance} here, is implemented in \cite{BJ} by solving an SDP, and is the main ``substance'' of the algorithm.

\paragraph{Applying the covariance lemma.}
We now formally justify the usage of Lemma \ref{lem:covariance} in the algorithm. At the start of an iteration of the while loop, let $h=|S|>100$ be the current number of alive variables. From \eqref{eq:column-budgets} and the original absolute column-sum bound $k_0=D$,
\[ \sum_{i:\ell_i\ge\ell}s_i(S) =\sum_{j\in S}\sum_{i:\ell_i\ge\ell}|a_{ij}| \le hk_\ell\qquad(0\le\ell\le L+1). \]
Rows in $\mathcal H$ at level $\ell$ have size greater than $H_\ell$, while rows in $\mathcal M_\ell$ and $\mathcal T$ have size greater than $b$. Dividing their respective total-size bounds by these thresholds gives
\[ \#\{i\in\mathcal H:\ell_i=\ell\}\le\frac{hk_\ell}{H_\ell}, \qquad m_\ell:=|\mathcal M_\ell|\le\frac{hk_\ell}{b}, \qquad |\mathcal T|\le\frac{hk_{L+1}}b\le\frac h{100}. \]
Consequently,
\[ \dim W\le1+\frac h{100} +\sum_{\ell=0}^L\frac h{100\,2^\ell}<\frac h{10}, \qquad \sum_{\ell=0}^L\frac{m_\ell}{\alpha_\ell} \le\frac h{A_0}\sum_{\ell=0}^L2^{-\ell}\le\frac h{10}. \]
These are precisely the two hypotheses of Lemma~\ref{lem:covariance}.

\paragraph{Termination and discrepancy.}
Each step stays in the cube because $|x_j|<1-\epsilon$ for $j\in S$ and $|v_j|\le1$. Since $v\perp x_S$ and $\|v\|_2=1$,
\begin{equation}\label{eq:norm-growth} \|x+\epsilon v\|_2^2=\|x\|_2^2+\epsilon^2, \end{equation}
where $v$ is extended by zero outside $S$. Freezing coordinates does not change $x$, so the bound $\|x\|_2^2\le n$ allows at most $n/\epsilon^2+1$ steps.

For $i\in\mathcal H$, $v\perp(a_i)_S$ keeps $\ip{a_i}{x}$ constant. For $i\in\mathcal M_\ell$, we have $|(u_i)_j|\le2|a_{ij}|$ and $\sum_{j\in S}a_{ij}^2\le H_\ell$. Thus~\eqref{eq:delta-Y} shows that $Y_i$ can exceed $2b_\ell$ by at most $2\epsilon\sqrt{H_\ell}$ in the step that increases its level. Since $\ip{a_i}{x}-c_i\le Y_i$, the increase of the signed row sum accumulated at level $\ell$ is at most $2b_\ell+2\epsilon\sqrt{H_\ell}$. For $\ell=L$, this same bound continues to hold after the row enters $\mathcal T$: we retain $c_i$ and require $v\perp u_i$, so \eqref{eq:delta-Y} makes $Y_i$ nonincreasing. Removing coordinates from $S$ also only decreases the quadratic term of $Y_i$.

Once $s_i(S)\le b$, the further change during the walk is at most $2b$. The final rounding changes each row sum by at most $200+n\epsilon$: at most $100$ coordinates are rounded arbitrarily, and every other rounded coordinate is within $\epsilon$ of its sign. Summing over levels, using~\eqref{eq:walk-par-bounds} and $\sum_\ell\sqrt{H_\ell}=O(\sqrt D)$, and considering both signed copies of each original row gives
\begin{equation}\label{eq:walk-error} \|A(x-z)\|_\infty\le Cb+O(n\epsilon). \end{equation}

\subsection{The joint tail.}\label{sec:joint tail}

It remains to prove~\eqref{eq:joint-tail}. Our concentration calculation follows the exponential-potential argument of \cite[Sections~3.2.5 and~5.2]{BJ}; see also the stopped exponential processes in \cite[Theorem~2.6]{BJexp}. As in those arguments, the quadratic correction supplies negative drift, while the covariance bound controls the variance of a sum of row exponentials.

Here we combine all levels and all columns in a fixed independent set $R$ of the column graph into one weighted potential, retaining each column's contribution after it is frozen. Our choices of $b_\ell$ and $\lambda_\ell$ make the variance bounds summable over levels, without a loss depending on the number of levels. Each frozen column in $R$ forces a constant contribution to the final potential, and distinct columns in $R$ use disjoint rows. Together with the exponential-moment estimate, this gives the factor $|R|$ in the exponent of~\eqref{eq:joint-tail}.

Let $t$ count completed iterations of the while loop, and let $x_t,S_t,\ell_i(t),c_i(t)$ denote the state after those iterations, including all level changes and column removals, so that $t=0$ is initialization. Write $\mathcal M_\ell(t),Y_i(t),u_i(t)$ for the corresponding sets and quantities from the algorithm, whenever defined, and let
\[ \tau=\min\{t:|S_t|\le100\} \]
be the number of iterations before the final rounding. For $t<\tau$, let $v_t$ be the sampled direction and $\mathcal Q_t$ the threshold list formed during the next iteration. Let $\mathcal F_t$ be the sigma-algebra generated by the first $t\wedge\tau$ iterations; for $t<\tau$, this is the history before sampling $v_t$.

Fix a nonempty independent set $R$ in the column graph and put $r=|R|$. Let $\mathcal I$ be the set of signed rows with a nonzero entry in a column of $R$. Each $i\in\mathcal I$ has exactly one such column, denoted by $j(i)$, and we write $w_i=|a_{i,j(i)}|$. For each level $\ell$, set
\[ \mathcal I_\ell(t) =\{i\in\mathcal I\cap\mathcal M_\ell(t):j(i)\in S_t\}. \]
For fixed $i$ and $\ell$, the times $\{t\,:\, i\in\mathcal I_\ell(t)\}$ form a single interval, possibly empty. Indeed, levels are non-decreasing, size is non-increasing, and columns never return to $S_t$. For $0\le\ell\le L$, define
\begin{equation}\label{eq:exp-par} \lambda_\ell=\frac{10(\ell+1)}{b_\ell},\qquad p_\ell=e^{-10(\ell+1)},\qquad \theta_\ell=\beta_\ell/24. \end{equation}

\paragraph{Auxiliary processes and the exponential potential.}

\begin{definition}[Auxiliary exponential processes]
Fix $i\in\mathcal I$ and $0\le\ell\le L$. Let
\[ \sigma=\inf\{t:0\le t\le\tau,\ i\in\mathcal I_\ell(t)\}, \qquad \rho=\inf\{t:\sigma<t\le\tau,\ i\notin\mathcal I_\ell(t)\} \]
be the entry and exit times, with the infima taken over integer times and $\inf\varnothing:=\infty$. For $t<\tau$ with $i\in\mathcal I_\ell(t)$, write
\begin{equation}\label{eq:raw-Y} \widehat Y_i(t+1) =Y_i(t)+\epsilon\ip{u_i(t)}{v_t} -\beta_\ell\epsilon^2 \sum_{j\in S_t}a_{ij}^2(v_t)_j^2, \end{equation}
Thus $\widehat Y_i(t+1)$ is the value of $Y_i$ immediately after applying the random move, before changing the level of $i$, removing columns, or resetting $c_i$. For $0\le t\le\tau$, define
\begin{equation}\label{eq:process-update}
  X_i^{(\ell)}(t)=
\begin{cases}
  1, &t<\sigma,\\[2pt]
  \exp\{\lambda_\ell(Y_i(t)-b_\ell)\}, &\sigma\le t<\rho,\\[2pt]
  \min\!\left\{p_\ell^{-1}, \exp\{\lambda_\ell(\widehat Y_i(\rho)-b_\ell)\}\right\}, &\rho\le t.
\end{cases}
\end{equation}
For $t>\tau$, set $X_i^{(\ell)}(t)=X_i^{(\ell)}(\tau)$. For $0\le t<\tau$, also define
\[
  \widehat X_i^{(\ell)}(t+1)=
\begin{cases}
  1, &t<\sigma,\\[2pt]
  \exp\{\lambda_\ell(\widehat Y_i(t+1)-b_\ell)\}, &\sigma\le t<\rho,\\[2pt]
  X_i^{(\ell)}(\rho), &\rho\le t.
\end{cases}
\]
Thus $\widehat X_i^{(\ell)}(t+1)$ uses only the random move, before capping, level changes, column removals, and assignments of new $c_i$. The final rounding is not part of either process.
\end{definition}

Two properties of this definition will be used below:
\begin{equation}\label{eq:process-bounds} 0<X_i^{(\ell)}(t)\le p_\ell^{-1},\qquad X_i^{(\ell)}(t+1)\le\widehat X_i^{(\ell)}(t+1) \quad(t<\tau). \end{equation}
With these auxiliary processes constructed, we are ready to state the exponential potential. Define:
\begin{equation}\label{eq:potential} \Psi(t)=\sum_{\ell=0}^L\Psi_\ell(t),\qquad \Psi_\ell(t)=\frac{p_\ell}{k_\ell} \sum_{i\in\mathcal I}w_iX_i^{(\ell)}(t). \end{equation}
Write $\widehat\Psi_\ell(t+1)$ and $\widehat\Psi(t+1)$ for the same sums with $X$ replaced by $\widehat X$. Since $\sum_{i\in\mathcal I}w_i\le rD$ and $X_i^{(\ell)}(0)\le1$,
\begin{equation}\label{eq:potential-initial} \Psi(0)\le r\sum_{\ell=0}^L\frac{D}{k_\ell}p_\ell \le\frac{re^{-10}}{1-100e^{-10}}<\frac r{1000}. \end{equation}
Then \eqref{eq:process-bounds} gives $\Psi(t+1)\le\widehat\Psi(t+1)$. Note that in $\Psi_{\ell}(t)$, only the terms indexed by $\mathcal I_\ell(t)$ change during increments of the random walk. The column bounds~\eqref{eq:column-budgets}, together with the original absolute column-sum bound $k_0=D$ for $\ell=0$, give
\begin{equation}\label{eq:tracked-row-sum} \sum_{i\in\mathcal I_\ell(t)}w_i \le\sum_{j\in R\cap S_t}\sum_{i:\ell_i(t)\ge\ell}|a_{ij}| \le rk_\ell, \qquad \text{ hence } \qquad \frac{p_\ell}{k_\ell} \sum_{i\in\mathcal I_\ell(t)}w_iX_i^{(\ell)}(t)\le r. \end{equation}
Thus the total contribution to $\Psi_\ell(t)$ from terms that evolve during the random move is at most $r$.

\paragraph{Exponential estimate.}
Our goal is to show that the potential does not grow too large. We do this by bounding its exponential moment, which will give the desired tail bound by Markov's inequality.

\begin{lemma}
For a sufficiently small absolute constant $c_1>0$ and a sufficiently large absolute constant $C_1$, setting $\nu=c_1b^3/D$ and $\epsilon=(nd)^{-C_1}$ gives
\begin{equation}\label{eq:potential-final} \E e^{\nu\Psi(\tau)}\le\exp\{\nu r/500\}. \end{equation}
\end{lemma}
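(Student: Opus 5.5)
The plan is to prove that $M_t:=\exp\{\nu\Psi(t\wedge\tau)-\nu\kappa r\,(t\wedge\tau)\epsilon^2\cdot 0\}$, more precisely $e^{\nu\Psi(t)}$ itself, is (up to a negligible $\epsilon$-error per step) a supermartingale with respect to $\mathcal F_t$. Combined with $\Psi(0)<r/1000$ from \eqref{eq:potential-initial}, optional stopping at $\tau\le n/\epsilon^2+1$ will then give \eqref{eq:potential-final}. Since $\Psi(t+1)\le\widehat\Psi(t+1)$ by \eqref{eq:process-bounds}, it suffices to show
\[ \E\bigl[e^{\nu\widehat\Psi(t+1)}\mid\mathcal F_t\bigr]\le e^{\nu\Psi(t)}(1+O(\epsilon^3 \mathrm{poly}(nD))) \]
for $t<\tau$. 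The accumulated error $(n/\epsilon^2)\epsilon^3\mathrm{poly}(nd)$ is then $o(1)$ for $\epsilon=(nd)^{-C_1}$, which is absorbed into the slack between $r/1000$ and $r/500$ (noting $\nu r/1000\ge \nu/1000$ while the error is tiny).

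\textbf{Step 1: one-step expansion of each live term.} Only the terms with $i\in\mathcal I_\ell(t)$ move. For such $i$, write $\Delta_i=\epsilon\ip{u_i}{v}-\beta_\ell\epsilon^2\sum_j a_{ij}^2v_j^2$. Since $|\lambda_\ell\Delta_i|\le\lambda_\ell\cdot 2\epsilon\sqrt{H_\ell}\ll1$, Taylor expansion gives
\[ \widehat X-X=X\Bigl(\lambda_\ell\epsilon\ip{u_i}{v}-\lambda_\ell\beta_\ell\epsilon^2\sum_j a_{ij}^2v_j^2+\lambda_\ell^2\epsilon^2\ip{u_i}{v}^2\Bigr)+O(\epsilon^3). \]
The linear term has mean zero because $v$ is mean zero. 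Since $\E\ip{u_i}{v}^2\le 6\sum_j 4a_{ij}^2\E v_j^2$ by \eqref{eq:coordinate-covariance}, the quadratic term is dominated by the drift exactly when $\lambda_\ell\lesssim\beta_\ell$, i.e.\ $10(\ell+1)/b_\ell\le\beta_\ell/24\cdot(\text{const})$. This needs $b_\ell\beta_\ell\gtrsim(\ell+1)$, i.e.\ $b_\ell^2/H_\ell\gtrsim \ell+1$, which holds from $b\ge C\sqrt d$, the $(\ell+1)^2$ factor in $b_\ell$, and $H_\ell=100\cdot 2^\ell k_\ell$ with $k_\ell=D100^{-\ell}$. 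So each term has drift at most $-\theta_\ell\lambda_\ell\epsilon^2X\sum_ja_{ij}^2\E v_j^2$ (with $\theta_\ell=\beta_\ell/24$ as the retained negative drift).

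\textbf{Step 2: the exponential of the sum.} Next I expand $e^{\nu\widehat\Psi}=e^{\nu\Psi}(1+\nu\delta+\nu^2\delta^2/2\cdot O(1))$, where $\delta=\widehat\Psi-\Psi=O(\epsilon\,\mathrm{poly})$. It remains to show that $\nu^2\E\delta^2\lesssim$ the negative drift $\nu\cdot\sum_\ell \frac{p_\ell}{k_\ell}\sum_i w_iX_i\theta_\ell\lambda_\ell\epsilon^2\sum_ja_{ij}^2\E v_j^2$. By Cauchy--Schwarz across levels with weights $(\ell+1)^{-2}$ (summable), I reduce to bounding each level separately. Within level $\ell$, $\delta_\ell$ is $\epsilon\frac{p_\ell}{k_\ell}\lambda_\ell\sum_{i}w_iX_i\ip{u_i}{v}$, a linear combination of the $u_i$ over $i\in\mathcal M_\ell$. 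The row-covariance bound \eqref{eq:row-covariance} then gives
\[ \E\delta_\ell^2\le\alpha_\ell\epsilon^2\Bigl(\tfrac{p_\ell\lambda_\ell}{k_\ell}\Bigr)^2\sum_i w_i^2X_i^2\,\E\ip{u_i}{v}^2. \]
I bound $w_iX_ip_\ell\le w_i\cdot 1$ using the cap $X_i\le p_\ell^{-1}$ (valid for live terms in the regime where $Y_i<2b_\ell+o(1)$, since then $X\le e^{\lambda_\ell b_\ell}=p_\ell^{-1}$). Then $w_i\le k_\ell$ by \eqref{eq:column-budgets}, and coordinate covariance is applied again. This leaves the requirement $\nu\,\alpha_\ell\lambda_\ell(\ell+1)^2\lesssim\theta_\ell$. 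With $\alpha_\ell=A_02^\ell\max(1,k_\ell/b)$ and $\nu=c_1b^3/D$, this reduces to $c_1b^3D^{-1}2^\ell(k_\ell/b)(\ell+1)^3/b_\ell\lesssim b_\ell/H_\ell$, i.e.\ $c_1 b^2\, 4^\ell k_\ell^2(\ell+1)^3\lesssim D\,b_\ell^2\cdot100$. Using $k_\ell\le D$, $k_\ell^2/D\le D100^{-2\ell}$, and $b_\ell^2=b^2(\ell+1)^45^{-2\ell}$, this holds with room given the geometric factor $(4\cdot25/10^4)^\ell$, together with $b\le c_0 d$.

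\textbf{Main obstacle.} The delicate point is the capping and exit handling. At the exit time $\rho$ the value $\widehat X$ can overshoot $p_\ell^{-1}$ only by a factor $e^{O(\lambda_\ell\epsilon\sqrt{H_\ell})}$. Also, jumps due to column freezing or level changes only decrease $\Psi$ via \eqref{eq:process-bounds}, so these steps are harmless but must be checked carefully against the definitions. Once the supermartingale inequality holds, I conclude by iterating the conditional bound up to $\tau$ and applying $\Psi(0)<r/1000$.
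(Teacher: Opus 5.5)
\textbf{There is a genuine gap, in the decisive step.} Your architecture is the paper's:
\begin{itemize}
\item a one-step bound on $\E[e^{\nu\widehat\Psi(t+1)}\mid\mathcal F_t]$, using $\Psi(t+1)\le\widehat\Psi(t+1)$;
\item a per-row Taylor expansion in which $\lambda_\ell\lesssim\theta_\ell$ (from $b\ge C\sqrt d$) lets the quadratic correction dominate;
\item a per-level variance bound from \eqref{eq:row-covariance}, then Cauchy--Schwarz across levels;
\item iteration up to $\tau\le n/\epsilon^2+1$.
\end{itemize}
The step that fails is comparing $\nu^2\E\delta_\ell^2$ with $\nu$ times the level-$\ell$ drift. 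After using the cap $p_\ell X_i\le1$, you bound $w_i\le k_\ell$. That cancels one of the two powers of $k_\ell$ in $(p_\ell\lambda_\ell/k_\ell)^2$, and you reach the requirement $\nu\,\alpha_\ell\lambda_\ell(\ell+1)^2\lesssim\theta_\ell$. Since $\alpha_\ell\ge A_02^\ell k_\ell/b$, your own reduction (once $b^2$ cancels and $k_\ell^2/D=D\,100^{-2\ell}$ is substituted) becomes $c_1D\,100^{-\ell}\lesssim\ell+1$ up to absolute constants. This is false at $\ell=0$ whenever $D\gg1/c_1$. The claim that it ``holds with room'' drops this leftover factor $D$. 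The assumption $b\le c_0d$ cannot rescue it, because $b$ has already cancelled. Since $c_1$ must be an absolute constant and $D$ is arbitrary, your supermartingale inequality is not established as written.

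\textbf{The fix is the bound the paper uses.} The entries of $A$ lie in $[-1,1]$, so $w_i\le1$ and hence $(p_\ell w_iX_i^{(\ell)})^2\le p_\ell w_iX_i^{(\ell)}$. This keeps the full factor $1/k_\ell^2$, and the requirement becomes $\nu(\ell+1)^2\alpha_\ell\lambda_\ell/k_\ell\lesssim\theta_\ell$. Because $k_\ell\ge k_L>b/100$, you have $\alpha_\ell/k_\ell\le100A_02^\ell/b$, so the large factor $k_\ell/b$ in $\alpha_\ell$ is exactly cancelled. Combined with $\lambda_\ell/\theta_\ell=24000D/(b^2(\ell+1)^32^\ell)$, the left side is $\nu\cdot O\bigl(D/(b^3(\ell+1))\bigr)=O\bigl(c_1/(\ell+1)\bigr)$, which is small for small $c_1$; this is \eqref{eq:level-variance}. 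Your Cauchy--Schwarz with weights $(\ell+1)^{-2}$ then closes the level sum. With this correction, the rest of your plan goes through essentially as in the paper.
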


\begin{proof}
Let $Q_r=1+(1+\nu)r(L+1)$. The majority of the proof consists of showing the one-step estimate
\begin{equation}\label{eq:multiplicative} \frac{\E[e^{\nu\Psi(t+1)}\mid\mathcal F_t]} {e^{\nu\Psi(t)}} \le1+CQ_r^3\epsilon^3 \qquad(t<\tau), \end{equation}
provided $\epsilon Q_r$ is sufficiently small. We bound the ratio using the raw update. In the conditional Taylor expansion of this upper bound, the coefficient of $\epsilon$ is zero, and we will show that the coefficient of $\epsilon^2$ is nonpositive. We compute the expansion in several steps, first bounding the increments of $X_i$, then $\Psi_\ell$, then finally $\Psi$ and the exponential. When working with the increments of $X_i$, it will also be convenient to use the ``raw'' update $\widehat X_i^{(\ell)}(t+1)-X_i^{(\ell)}(t)$, which suffices for an upper bound in view of \eqref{eq:process-bounds}.

We begin by estimating the change caused by an increment of the random walk, before changing levels or freezing columns. We first expand $\widehat X_i^{(\ell)}(t+1)-X_i^{(\ell)}(t)$ and then sum to obtain $\widehat\Psi(t+1)-\Psi(t)$. By~\eqref{eq:process-bounds}, the latter is an upper bound on the actual potential increment.

Fix $t<\tau$, $0\le\ell\le L$, and $i\in\mathcal I_\ell(t)$, and condition on $\mathcal F_t$. The current state, including $S_t$, $u_i(t)$, and $X_i^{(\ell)}(t)$, is then fixed. By the definitions of $X_i^{(\ell)}$ and $\widehat X_i^{(\ell)}$, a second-order Taylor expansion gives
\begingroup
\begin{equation}\label{eq:row-expansion}
\begin{aligned}
  \frac{\widehat X_i^{(\ell)}(t+1)}{X_i^{(\ell)}(t)} &=\exp\left\{ \epsilon\lambda_\ell\ip{u_i(t)}{v_t} -\epsilon^2\lambda_\ell\beta_\ell \sum_{j\in S_t}a_{ij}^2(v_t)_j^2 \right\}\\
  &=1+\epsilon\lambda_\ell\ip{u_i(t)}{v_t} +\epsilon^2\left( \frac{\lambda_\ell^2}{2}\ip{u_i(t)}{v_t}^2 -\lambda_\ell\beta_\ell \sum_{j\in S_t}a_{ij}^2(v_t)_j^2 \right) +O(\epsilon^3).
\end{aligned}
\end{equation}
\endgroup
We first justify that the remainder is uniform. Since $|x_{t,j}|\le1$, $|a_{ij}|\le1$, and $\beta_\ell\le1/2$,
\[ |(u_i(t))_j| =|a_{ij}-2\beta_\ell a_{ij}^2x_{t,j}| \le(1+2\beta_\ell)|a_{ij}| \le2|a_{ij}|. \]
Also, $\|v_t\|_2=1$ and $\sum_{j\in S_t}a_{ij}^2\le H_\ell$, because $i\in\mathcal M_\ell(t)$. Together with the parameter choices, these give
\begin{equation}\label{eq:increment-bounds} \begin{aligned} \lambda_\ell|\ip{u_i(t)}{v_t}| \le2\lambda_\ell\sqrt{H_\ell}\le C, \qquad 0\le\lambda_\ell\beta_\ell \sum_{j\in S_t}a_{ij}^2(v_t)_j^2 &\le\lambda_\ell\beta_\ell\le C. \end{aligned} \end{equation}
For $0<\epsilon\le1$, the Taylor remainder is therefore bounded in absolute value by $C'\epsilon^3$, uniformly over rows, levels, histories, and sampled directions.

We now take conditional expectations. The term of order $\epsilon$ vanishes because $\E[v_t\mid\mathcal F_t]=0$. It remains to compare the two competing terms of order $\epsilon^2$: the positive term comes from exponentiating the mean-zero linear increment, while the negative term comes from the quadratic correction in $Y_i$. Using~\eqref{eq:coordinate-covariance} and the bound on $|(u_i(t))_j|$ gives
\[ \begin{aligned} \E[\ip{u_i(t)}{v_t}^2\mid\mathcal F_t] &\le6\sum_{j\in S_t}(u_i(t))_j^2 \E[(v_t)_j^2\mid\mathcal F_t] \le24\sum_{j\in S_t}a_{ij}^2 \E[(v_t)_j^2\mid\mathcal F_t]. \end{aligned} \]
Consequently, recalling $\theta_\ell=\beta_\ell/24$,
\[
\begin{aligned}
  \frac{ \E[\widehat X_i^{(\ell)}(t+1) -X_i^{(\ell)}(t)\mid\mathcal F_t] }{X_i^{(\ell)}(t)} &=\epsilon^2\left( \frac{\lambda_\ell^2}{2} \E[\ip{u_i(t)}{v_t}^2\mid\mathcal F_t] -\lambda_\ell\beta_\ell \sum_{j\in S_t}a_{ij}^2 \E[(v_t)_j^2\mid\mathcal F_t] \right)+O(\epsilon^3)\\
  &\quad\le\epsilon^2 \left(\frac{\lambda_\ell^2}{2} -\lambda_\ell\theta_\ell\right) \E[\ip{u_i(t)}{v_t}^2\mid\mathcal F_t] +O(\epsilon^3).
\end{aligned}
\]
Our parameter choices and $b\ge C\sqrt d$ give
\begin{equation}\label{eq:lambda-theta} \frac{\lambda_\ell}{\theta_\ell} =\frac{24000D}{b^2(\ell+1)^3 2^\ell}\le1. \end{equation}
Thus the positive second-order term has magnitude at most half of the negative term, and
\begin{equation}\label{eq:row-drift} \E[\widehat X_i^{(\ell)}(t+1) -X_i^{(\ell)}(t)\mid\mathcal F_t] \le -\frac{\lambda_\ell\theta_\ell}{2} X_i^{(\ell)}(t) \, \E[\ip{u_i(t)}{v_t}^2\mid\mathcal F_t]\,\epsilon^2 +O\!\left(X_i^{(\ell)}(t)\epsilon^3\right). \end{equation}

This provides the desired row-wise control on raw increments. We now turn to the increments of $\Psi_\ell$. Subtracting one from~\eqref{eq:row-expansion}, multiplying by $p_\ell w_iX_i^{(\ell)}(t)/k_\ell$, and summing over $i\in\mathcal I_\ell(t)$ yields
\begingroup
\begin{equation}\label{eq:potential-expansion}
\begin{aligned}
  \widehat\Psi_\ell(t+1)-\Psi_\ell(t) &=\epsilon\,\frac{p_\ell\lambda_\ell}{k_\ell} \sum_{i\in\mathcal I_\ell(t)} w_iX_i^{(\ell)}(t)\ip{u_i(t)}{v_t}\\
  &\quad+\epsilon^2\,\frac{p_\ell}{k_\ell} \sum_{i\in\mathcal I_\ell(t)} w_iX_i^{(\ell)}(t) \left( \frac{\lambda_\ell^2}{2}\ip{u_i(t)}{v_t}^2 -\lambda_\ell\beta_\ell \sum_{j\in S_t}a_{ij}^2(v_t)_j^2 \right) +O(r\epsilon^3).
\end{aligned}
\end{equation}
\endgroup
The remainder is again uniform  by~\eqref{eq:tracked-row-sum}. Define
\begin{equation}\label{eq:drift}
\begin{aligned}
  Z_\ell(t) &=\frac{p_\ell\lambda_\ell}{k_\ell} \sum_{i\in\mathcal I_\ell(t)} w_iX_i^{(\ell)}(t)\ip{u_i(t)}{v_t},\\
  \mathcal D_\ell(t) &=\frac{p_\ell\lambda_\ell\theta_\ell}{2k_\ell} \sum_{i\in\mathcal I_\ell(t)} w_iX_i^{(\ell)}(t) \E[\ip{u_i(t)}{v_t}^2\mid\mathcal F_t].
\end{aligned}
\end{equation}
In~\eqref{eq:potential-expansion}, $Z_\ell(t)$ is the coefficient of $\epsilon$ and has conditional mean zero. The quantity $-\mathcal D_\ell(t)$ bounds the conditional expectation of the coefficient of $\epsilon^2$, as is seen by summing \eqref{eq:row-drift} over $i\in\mathcal I_\ell(t)$ with weights $p_\ell w_i/k_\ell$. Thus, recalling~\eqref{eq:tracked-row-sum},
\begingroup
\begin{equation}\label{eq:potential-drift} \E[\widehat\Psi_\ell(t+1)-\Psi_\ell(t)\mid\mathcal F_t] \le-\epsilon^2\mathcal D_\ell(t)+Cr\epsilon^3. \end{equation}
\endgroup

We now bound the variance of $Z_\ell(t)$, which will control the positive second-order term arising when we Taylor expand the exponential moment of $\Psi$. As $Z_\ell(t)$ is conditionally centered, its conditional second moment is equal to its conditional variance. Applying~\eqref{eq:row-covariance} with coefficients $w_iX_i^{(\ell)}(t)$ for $i\in\mathcal I_\ell(t)$ and zero coefficients on $\mathcal M_\ell(t)\setminus\mathcal I_\ell(t)$, we claim:
\begin{align}
  \E[Z_\ell(t)^2\mid\mathcal F_t] &\le \left(\frac{p_\ell\lambda_\ell}{k_\ell}\right)^2 \alpha_\ell \sum_{i\in\mathcal I_\ell(t)} w_i^2\bigl(X_i^{(\ell)}(t)\bigr)^2 \E[\ip{u_i(t)}{v_t}^2\mid\mathcal F_t]\notag\\
  &\le \frac{2\alpha_\ell\lambda_\ell}{\theta_\ell k_\ell} \mathcal D_\ell(t) \le\frac{CD}{b^3(\ell+1)^3}\mathcal D_\ell(t). \label{eq:level-variance}
\end{align}
Some justification is in order. For the second inequality, we used
\[ \bigl(w_iX_i^{(\ell)}(t)\bigr)^2 \le p_\ell^{-1}w_iX_i^{(\ell)}(t), \]
which follows from $w_i\le1$ and $X_i^{(\ell)}(t)\le p_\ell^{-1}$, and then substituted the definition of $\mathcal D_\ell(t)$. The last inequality uses~\eqref{eq:lambda-theta} and
\[ \frac{\alpha_\ell}{k_\ell} =A_0\,2^\ell\max\{1/k_\ell,1/b\} \le\frac{100A_0\,2^\ell}{b}, \]
since $k_\ell\ge k_L>b/100$.

With this control established over the increments of $\Psi_\ell$, it remains to sum over $\ell$. Adopting the shorthand $ Z(t)=\sum_{\ell=0}^L Z_\ell(t)$ and $ \mathcal D(t)=\sum_{\ell=0}^L\mathcal D_\ell(t)$, combining the increment estimates over the levels $\ell$ gives
\[ \E[\widehat\Psi(t+1)-\Psi(t)\mid\mathcal F_t] \le-\epsilon^2\mathcal D(t)+Cr(L+1)\epsilon^3. \]
Moreover,
\begin{equation}\label{eq:all-variance} \begin{aligned} \E[Z(t)^2\mid\mathcal F_t] &\le \left(\sum_{\ell=0}^L \sqrt{\E[Z_\ell(t)^2\mid\mathcal F_t]}\right)^2 \le\frac{CD}{b^3} \left(\sum_{\ell=0}^L \frac{\sqrt{\mathcal D_\ell(t)}}{(\ell+1)^{3/2}} \right)^2 \le\frac{CD}{b^3}\mathcal D(t). \end{aligned} \end{equation}
The last inequality follows from Cauchy--Schwarz, using $\sum_{\ell\ge0}(\ell+1)^{-3}<\infty$.

We can now prove the one-step exponential estimate \eqref{eq:multiplicative}. Summing~\eqref{eq:potential-expansion} over $0\le\ell\le L$ and using~\eqref{eq:tracked-row-sum} and~\eqref{eq:increment-bounds} gives the pointwise bounds
\[
\begin{aligned}
  |\widehat\Psi(t+1)-\Psi(t)| &\le Cr(L+1)\epsilon,\\
  |\widehat\Psi(t+1)-\Psi(t)-\epsilon Z(t)| &\le Cr(L+1)\epsilon^2.
\end{aligned}
\]
In particular,
\[ \bigl(\widehat\Psi(t+1)-\Psi(t)\bigr)^2 =\epsilon^2Z(t)^2 +O\!\left(r^2(L+1)^2\epsilon^3\right). \]
Thus, provided $\epsilon Q_r$ is sufficiently small, Taylor expansion of the outer exponential gives
\begin{align*}
  \frac{\E[e^{\nu\Psi(t+1)}\mid\mathcal F_t]} {e^{\nu\Psi(t)}} &\le \E\!\left[ e^{\nu(\widehat\Psi(t+1)-\Psi(t))} \,\middle|\,\mathcal F_t \right]\\
  &=1+\nu\,\E[\widehat\Psi(t+1)-\Psi(t)\mid\mathcal F_t] +\frac{\nu^2\epsilon^2}{2} \E[Z(t)^2\mid\mathcal F_t] +O(Q_r^3\epsilon^3)\\
  &\le1+\left( -\nu\mathcal D(t) +\frac{\nu^2}{2}\E[Z(t)^2\mid\mathcal F_t] \right)\epsilon^2 +CQ_r^3\epsilon^3\\
  &\le1+CQ_r^3\epsilon^3.
\end{align*}
The first inequality uses $\Psi(t+1)\le\widehat\Psi(t+1)$. The remainder is uniform by the preceding pointwise bounds. The last inequality follows from~\eqref{eq:all-variance} and $\nu=c_1b^3/D$, with $c_1$ sufficiently small. This proves~\eqref{eq:multiplicative}.

With \eqref{eq:multiplicative} established, it remains to choose the step size and iterate. Take $\epsilon=(nd)^{-C_1}$, independently of $R$, with $C_1$ sufficiently large that $\epsilon Q_n$ is sufficiently small and
\[ CnQ_n^3\epsilon\le\nu/1000, \qquad n\epsilon\le b. \]
Such a choice is possible since $L=O(\log(2d))$ and $1\lesssim\nu\lesssim d^2$. The process $\Psi(t)$ and the history are constant after $\tau\le n/\epsilon^2+1$, so the conditional ratio in~\eqref{eq:multiplicative} equals $1$ on $\{\tau\le t\}$. We may therefore iterate that inequality up to the deterministic bound $\lceil n/\epsilon^2\rceil+1$. Using \eqref{eq:potential-initial}, $Q_r\le Q_n$, and the choice of $\epsilon$, and absorbing an absolute factor into $C$, gives
\[ \E e^{\nu\Psi(\tau)} \le\exp\{\nu r/1000+\nu/1000\} \le\exp\{\nu r/500\}. \]
\end{proof}

\paragraph{Conclusion.}
We now show that $\{R\subseteq B\} \implies \{\Psi(\tau)>r/100\}$. The exponential-moment bound~\eqref{eq:potential-final} will then give the desired probability estimate.

Fix $j\in R\cap B$, and let $t<\tau$ be such that the algorithm moves $j$ from $S_t$ to $B$ during the next iteration. By the freezing rule~\eqref{eq:column-budgets}, this happens because the sum of absolute entries in column $j$ over rows that have reached some level $\ell+1$ exceeds $k_{\ell+1}$, for some $0\le\ell\le L$. Since row levels never decrease, the corresponding sum still exceeds this threshold at the end of the iteration:
\[ \sum_{i:\ell_i(t+1)\ge\ell+1}|a_{ij}| >k_{\ell+1}=k_\ell/100. \]

Each row contributing to this sum crossed its level-$\ell$ threshold during some iteration from time $s$ to $s+1$, where $s\le t$. At the start of that iteration, the row belonged to $\mathcal M_\ell(s)$ and $j$ was still alive, so $i\in\mathcal I_\ell(s)$. Its crossing was recorded in $\mathcal Q_s$, with $\widehat Y_i(s+1)\ge2b_\ell$. The row's level then increased, and the exit case of~\eqref{eq:process-update} gives
\[ X_i^{(\ell)}(\tau)=p_\ell^{-1}. \]
This also holds when the crossing and the freezing of $j$ occur in the same iteration: the list $\mathcal Q_s$ is formed before any columns are removed. Consequently, the terms associated with column $j$ in $\Psi_\ell(\tau)$ satisfy
\[ \frac{p_\ell}{k_\ell} \sum_{i\in\mathcal I:j(i)=j} w_iX_i^{(\ell)}(\tau) \ge \frac{1}{k_\ell} \sum_{i:\ell_i(t+1)\ge\ell+1}|a_{ij}| >\frac1{100}. \]
Because $R$ is independent in the column graph, no signed row meets two columns of $R$. Thus different columns use disjoint terms of the potential. On the event $R\subseteq B$, each of the $r$ columns therefore contributes more than $1/100$ to $\Psi(\tau)$, even if the relevant level $\ell$ differs between columns. Hence, as claimed, $\{R\subseteq B\} \implies \{\Psi(\tau)>r/100\}$. Markov's inequality and~\eqref{eq:potential-final} now give
\[ \begin{aligned} \Prob(R\subseteq B) \le \Prob(\Psi(\tau)>r/100) \le e^{-\nu r/100}\,\E[ e^{\nu\Psi(\tau)}] \le e^{-\nu r(1/100-1/500)} \le e^{-\frac{cb^3r}d}. \end{aligned} \]
Combining this with~\eqref{eq:walk-error} and noting $n\epsilon\le b$ by our parameter choices, Proposition~\ref{prop:partial} follows. \qed

\end{document}